\newtheorem{theorem}{Theorem}[section]
\newtheorem{lemma}[theorem]{Lemma}
\newtheorem{proposition}[theorem]{Proposition}
\newtheorem{corollary}[theorem]{Corollary}
\theoremstyle{definition}
\newtheorem{definition}[theorem]{Definition}
\theoremstyle{remark}
\newcommand{\RR}{{\mathbb R}}
\begin{document}
\title{Phylogenetic trees and Euclidean embeddings}

\author{ Mark Layer}
\author{John A. Rhodes }
\address{Department of Mathematics and Statistics, University of Alaska Fairbanks, Fairbanks, AK, 99775, USA}
\email{j.rhodes@alaska.edu}


\begin{abstract}
It was recently observed by de Vienne et al. that  a simple square root transformation of distances between taxa on a phylogenetic tree allowed for an embedding of the taxa into Euclidean space. While the justification for this was based on a diffusion model of continuous character evolution along the tree, here we give a direct and elementary explanation for it that provides substantial additional insight. We use this embedding to reinterpret the differences between the NJ and BIONJ tree building algorithms, providing one illustration of how this embedding reflects tree structures in data.
\end{abstract}

\maketitle

\section{Introduction}
Metric trees are the primary mathematical structures underlying phylogenetics, and many of its statistical analyses. A recent work by \citet{de2011} made the observation that taxon relationships representable by such trees naturally corresponded to configurations of points in Euclidean space. More specifically, given any metric tree relating a collection of taxa $X=\{x_i\}$, there is
a collection of points $P=\{\Psi(x_i)\}$ in a Euclidean space such that the distances between the points $\Psi(x_i)$ are exactly the square roots of the tree distances between the taxa $x_i$. This Euclidean point configuration represents the same information as the tree, and can offer a valuable alternative perspective, as much intuition and
many standard statistical techniques are focused on Euclidean spaces. For instance, in the above cited and a subsequent work \citep{de2012},  the Euclidean tool of principal component analysis (PCA) is applied to phylogenetic applications in a more natural way than in previous efforts.

The argument given by \citet{de2011} for this fundamental correspondence between trees and certain point configurations follows three steps: (1) a tree distance matrix is related to the covariance matrix of a diffusion model of continuous character evolution along the tree; (2) the covariance matrix is positive definite; and (3)  classical multidimensional scaling allows one to find Euclidean points realizing distances associated to such a matrix. Unfortunately, the argument given for step (1) was incomplete, so a full justification was lacking.\footnote{The gap in the argument is as follows: If $D$ denotes the $n\times n$ matrix of pairwise distances between taxa on some metric tree, and $F=I_n-\frac 1n \mathbf 1\mathbf 1^T$ where $\mathbf 1$ is a column of ones, then by multidimensional scaling theory the desired Euclidean embedding exists if and only if the ``doubly centered" matrix $H=(-1/2)FDF$ is positive semidefinite. The covariance matrix $\Sigma$ of the diffusion process on a rooted version of the tree is positive definite, and \citet{de2011} suggest that $H=\Sigma$. However, this relationship is invalid: $\Sigma$ is positive definite, while $H$ is not; $\Sigma$ depends on the root location, while $H$ does not. The correct relationship, that $H=F\Sigma F$, was not established. While the gap can be filled by proving this equality directly, our approach is simpler and more easily yields additional results.}  However, the simplicity of the conclusion suggests there should be a simpler direct explanation, not appealing to the diffusion model. The original motivation for this work was to provide one, based only in Euclidean geometry.

Our approach, however, also yields substantial new understanding, in illuminating finer geometric features of the point configurations associated to metric trees. Both the splits (bipartitions of the taxa corresponding to edges) of the tree and individual edge lengths are reflected in the point configuration in simple ways. Specifically, the splits correspond to partitions of the point configuration into orthogonal sets, and dot products of vectors between points recover the length in the tree of the common path between those taxa.  

We further show how the improvement to the neighbor joining (NJ) algorithm giving the
BIONJ algorithm for tree construction from dissimilarity data can be motivated by the point configuration. In particular, this offers a perspective on BIONJ as an iterative process involving least squares in the configuration space. Though this is perhaps not what was meant by the call by  \citet{gascuel97} for ``further exploration concerning the relationship of this [BIONJ] theory with that of generalized least-squares,''  we believe it illustrates well the value of the point configuration viewpoint.

The existence of the point configuration can also be deduced from more general theorems in distance geometry (cf.~the survey of \citet{CritchleyFichet}). However, the explicit treatment we give here seems most useful in the phylogenetic context, as the angular features of the configuration become apparent.

\section{Metric trees and point configurations}

The Euclidean point configuration corresponding to a tree is best understood by first considering a configuration
corresponding to not just the leaves of the tree, but rather all nodes, including internal ones.

Let $T$ be a metric phylogenetic tree (rooted or unrooted, not necessarily binary) with leaves uniquely labelled by the taxa in a set  $X$. Let $V=V(T)$ be the set of all $m$ nodes in the tree, with $X$ viewed as a subset of $V$.  Then $T$ has $m-1$ edges, which we arbitrarily order as $e_1, e_2,\dots,e_{m-1}$. We assume every edge length, $w(e_i)$,  is strictly positive. For any two nodes $v,u\in V$, let $P_{v,u}$ denote the oriented path (i.e., the ordered set of edges) from $v$ to $u$ in the tree $T$. The \emph{tree metric} $d:V\times V\to \mathbb R^{\ge 0}$ is then defined by

$$d(v,u)=\sum_{e\in P_{v,u}} w(e).$$

\begin{definition} Let $T$ be a metric phylogenetic tree, with the conventions above. Fix a choice of a \emph{base node} $v\in V$. Then the
 \textit{square root embedding}  of the nodes of $T$ is $\Psi_v:V\rightarrow \RR^{m-1}$, defined by  $\Psi_{v}(u) = (\alpha_1,\alpha_2,\dots,\alpha_{m-1}),$ where  

$$\alpha_i = \begin{cases}\sqrt{w(e_i)}, & \text{ if  $e_i\in P_{v,u}$},\\$0$, & \text{otherwise.}  \end{cases}$$
We refer to the $i$th coordinate in $\RR^{m-1}$ as the \textit{$e_i$-coordinate}. 
\end{definition}

\begin{figure}[h]
\begin{center}
\begin{minipage}[c]{1.6in}
\includegraphics[width=1.6in]{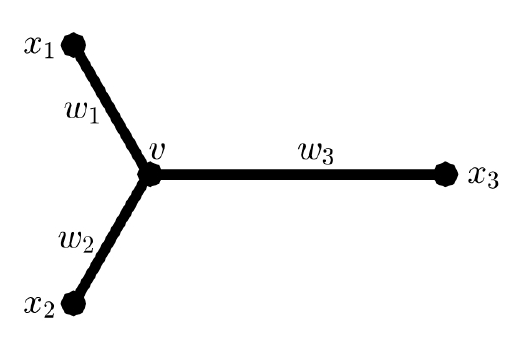}
\\ \vfill \end{minipage}
\hskip .05in
\begin{minipage}[c]{2.6in}
\includegraphics[width=2.6in]{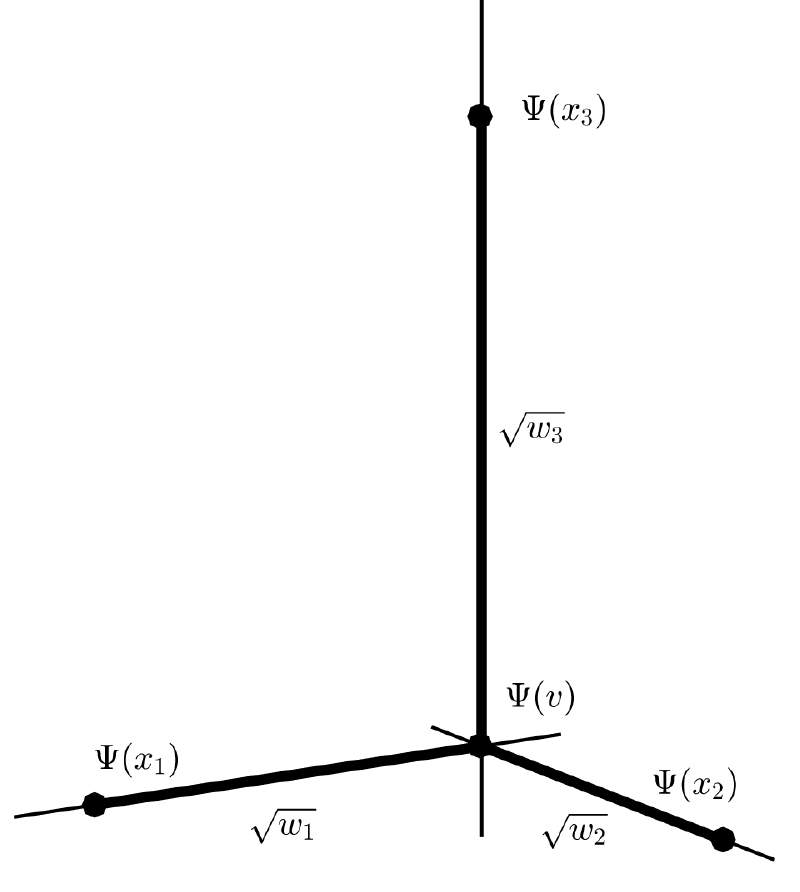}
 \end{minipage}
\hskip 1.in \phantom{X}
\end{center}
\caption{A 3-leaf unrooted tree with vertices $x_1,x_2,x_3,v$, and edge lengths $w_1,w_2,w_3$ (left) and the image of its vertices in $\RR^3$ (right) under the square root map $\Psi_{v}$.} \label{fig:3leaf}
 \end{figure}

Figure \ref{fig:3leaf} illustrates how the square root embedding sends the 4 nodes of a 3-taxon metric tree into $\RR^3$. Note that in the figure the Euclidean distance between $\Psi_{v}(x_1)=(\sqrt w_1,0, 0)$ and $\Psi_{v}(x_2)=(0,\sqrt w_2, 0)$ is 

$$\| \Psi_{v}(x_1)-\Psi_{v}(x_2)\| =\sqrt{(\sqrt{w_1})^2+(-\sqrt{w_2})^2+0^2}= \sqrt{w_1+w_2}=\sqrt{d(x_1,x_2)}.$$ Thus the fact that the Euclidean distance between $\Psi_{v}(x_1)$ and $\Psi_{v}(x_2)$ is the square root of the tree distances between $x_1$ and $x_2$ is a simple consequence of the Pythagorean Theorem.  

That this generalizes to all trees will be shown as part of Theorem \ref{thm:subpath} below. The key observation is the following.

\begin{lemma}\label{lem:posneglemma} Let $v,v_1,v_2\in V(T)$. Then  

$$\Psi_v( v_1)-\Psi_v( v_2)=( \gamma_1, \gamma_2,\dots, \gamma_{m-1})$$ $$\text{ where }  \gamma_i = \begin{cases} \phantom{-}\sqrt{w(e_i)} & \text{ when $e_i \in P_{v, v_1}$ and $e_i \notin P_{v, v_2}$,}\\
-\sqrt{w(e_i)} & \text{ when $e_i \in P_{v, v_2}$ and  $e_i \notin P_{v, v_1}$,}\\
\phantom{-}\ \ \ \ 0& \text{ otherwise.}\\
\end{cases}$$
 In particular, $ \gamma_i \neq 0$ exactly when $e_i \in P_{v_1, v_2}.$
\end{lemma}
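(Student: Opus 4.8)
The plan is to compute the two embedding vectors coordinatewise and subtract, then read off when the difference is nonzero in terms of the path structure of the tree. First I would recall that, by definition, $\Psi_v(v_1)$ has $e_i$-coordinate equal to $\sqrt{w(e_i)}$ if $e_i\in P_{v,v_1}$ and $0$ otherwise, and similarly for $\Psi_v(v_2)$ with $P_{v,v_2}$. Since each coordinate takes only the two values $\sqrt{w(e_i)}$ or $0$, the $i$th coordinate of the difference is $\sqrt{w(e_i)}$ when $e_i\in P_{v,v_1}\setminus P_{v,v_2}$, it is $-\sqrt{w(e_i)}$ when $e_i\in P_{v,v_2}\setminus P_{v,v_1}$, and it is $0$ when $e_i$ belongs to both paths or to neither; this is exactly the claimed formula for $\gamma_i$. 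This part is immediate from the definition and needs no real argument beyond a case split.

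The substantive content is the final sentence: $\gamma_i\neq 0$ exactly when $e_i\in P_{v_1,v_2}$. From the formula above, $\gamma_i\neq 0$ iff $e_i$ lies in exactly one of $P_{v,v_1}$ and $P_{v,v_2}$, i.e. iff $e_i\in P_{v,v_1}\,\triangle\,P_{v,v_2}$ (symmetric difference). So the claim reduces to the purely combinatorial identity $P_{v,v_1}\,\triangle\,P_{v,v_2}=P_{v_1,v_2}$ for paths in a tree. I would prove this using the defining property of trees: between any two nodes there is a unique simple path, and since $w>0$ this unique path is also what the tree metric "sees". Concretely, walking from $v_1$ to $v$ and then from $v$ to $v_2$ gives a walk from $v_1$ to $v_2$; any edge traversed twice (necessarily once in each direction, by the tree structure) corresponds to an edge in $P_{v,v_1}\cap P_{v,v_2}$ and these cancel, leaving a simple path from $v_1$ to $v_2$, which by uniqueness must be $P_{v_1,v_2}$; the edges that survive are precisely those in the symmetric difference. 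Alternatively, one can argue via the median/Steiner node $c$ of $v,v_1,v_2$: writing $P_{v,v_1}=P_{v,c}\cup P_{c,v_1}$, $P_{v,v_2}=P_{v,c}\cup P_{c,v_2}$, and $P_{v_1,v_2}=P_{v_1,c}\cup P_{c,v_2}$, with the first two pieces sharing exactly $P_{v,c}$ (which cancels in the symmetric difference) and $P_{c,v_1}$, $P_{c,v_2}$ being edge-disjoint.

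The main obstacle, such as it is, is being careful with orientations, since the $P_{v,u}$ are defined as \emph{oriented} paths (ordered sets of edges) while the $\gamma_i$ formula treats membership of an unoriented edge $e_i$; I would simply note that the definition of $\Psi_v$ depends only on the underlying edge set of $P_{v,u}$, not its orientation, so all set operations above can be performed on unoriented edge sets. Beyond that bookkeeping, the argument is elementary and rests entirely on uniqueness of paths in a tree together with the Pythagorean observation already illustrated in Figure \ref{fig:3leaf}; indeed the distance statement $\|\Psi_v(v_1)-\Psi_v(v_2)\|^2=\sum_{e_i\in P_{v_1,v_2}}w(e_i)=d(v_1,v_2)$ follows immediately from the lemma, since the nonzero $\gamma_i$ have squares $w(e_i)$ ranging exactly over the edges of $P_{v_1,v_2}$.
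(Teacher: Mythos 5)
Your proof is correct and follows essentially the same route as the paper: a coordinatewise case split from the definition of $\Psi_v$, followed by the observation that the edges of $P_{v_1,v_2}$ are exactly those lying in precisely one of $P_{v,v_1}$ and $P_{v,v_2}$. The only difference is that you supply an explicit justification (via cancellation or the median node) for that symmetric-difference identity, which the paper simply asserts.
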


\begin{proof}
Note that the $e$-coordinate of $\Psi_v( v_1)$ is $\sqrt{w(e)}$ only when $e \in P_{v, v_1}$, and zero otherwise. Similarly the $e$-coordinate of $\Psi_v( v_2)$ is $\sqrt{w(e)}$ only when $e \in P_{v, v_2}$ and zero otherwise. Then it immediately follows that the $e_i$-coordinate of $\Psi_v( v_1)-\Psi_v( v_2)$ is given by the stated formula for $\gamma_i$.  

The edges on the path $P_{ v_1, v_2}$ are those which lie in exactly one of $P_{v, v_1}$ and $P_{v, v_2}$. Therefore the nonzero coordinates of $\Psi_v( v_1)-\Psi_v( v_2)$ correspond to the edges on the path $P_{ v_1, v_2}$. \qed
\end{proof}

\begin{theorem}\label{thm:subpath} Let $v_1,v_2,v_3,v_4\in V(T)$. Then 
\begin{equation}
(\Psi_v(v_2)-\Psi_v(v_1))\cdot(\Psi_v(v_4)-\Psi_v(v_3)) =\pm \sum_{e\in P} w(e)\label{eq:dot}
\end{equation}
where $P=P_{v_1,v_2}\cap P_{v_3,v_4}$ is the subpath common to $P_{v_1,v_2}$ and $P_{v_3,v_4}.$ The sign is positive if the subpath is oriented in the same direction in both paths and is negative if oppositely oriented.
In particular,

\begin{equation*}
\|\Psi_v(v_2)-\Psi_v(v_1)\| = \sqrt{d(v_1,v_2)}.\end{equation*}
\end{theorem}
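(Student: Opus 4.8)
The plan is to read both statements off Lemma \ref{lem:posneglemma}, the only genuine work being to keep track of signs. By that lemma, the $e_i$-coordinate of $\Psi_v(v_2)-\Psi_v(v_1)$ equals $\pm\sqrt{w(e_i)}$ when $e_i$ lies on $P_{v_1,v_2}$ and is $0$ otherwise; likewise the $e_i$-coordinate of $\Psi_v(v_4)-\Psi_v(v_3)$ is $\pm\sqrt{w(e_i)}$ when $e_i\in P_{v_3,v_4}$ and is $0$ otherwise. Computing the dot product coordinate by coordinate, the only nonzero contributions come from edges lying on both paths, and each contributes $(\pm\sqrt{w(e_i)})(\pm\sqrt{w(e_i)})=\pm w(e_i)$. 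So
\begin{equation*}
(\Psi_v(v_2)-\Psi_v(v_1))\cdot(\Psi_v(v_4)-\Psi_v(v_3))=\sum_{e_i\in P}(\pm w(e_i)),\qquad P=P_{v_1,v_2}\cap P_{v_3,v_4},
\end{equation*}
and everything reduces to determining these signs.

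Here I would invoke the elementary fact that in a tree the intersection of two paths is again a path (possibly empty), since the intersection of two subtrees of a tree is a subtree and a subtree contained in a path is a subpath. Thus $P$ is a single connected subpath, and each of $P_{v_1,v_2}$ and $P_{v_3,v_4}$ runs along it as a contiguous block of edges, traversed in one consistent direction. By the sign rule of Lemma \ref{lem:posneglemma}, the sign attached to $e_i$ above is positive exactly when the two paths cross $e_i$ in the same direction (say, relative to the orientation of $e_i$ pointing away from the base node $v$) and negative when they cross it in opposite directions. Since both paths traverse the connected subpath $P$ monotonically, "same direction across $e_i$" holds for every edge of $P$ at once, or for none; hence the sign is constant along $P$, can be pulled out of the sum, and we obtain $\pm\sum_{e\in P}w(e)$ with sign $+$ precisely when $P$ carries the same orientation from both paths. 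This is \eqref{eq:dot}.

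The norm identity is then the special case $v_3=v_1$, $v_4=v_2$: now $P=P_{v_1,v_2}$, the two coinciding orientations trivially agree, so \eqref{eq:dot} gives $\|\Psi_v(v_2)-\Psi_v(v_1)\|^2=\sum_{e\in P_{v_1,v_2}}w(e)=d(v_1,v_2)$, and taking square roots completes the proof. (One could also obtain this directly from Lemma \ref{lem:posneglemma}: if $\Psi_v(v_1)-\Psi_v(v_2)=(\gamma_1,\dots,\gamma_{m-1})$, then each $\gamma_i$ is $0$ or $\pm\sqrt{w(e_i)}$, so $\sum_i\gamma_i^2=\sum_{e_i\in P_{v_1,v_2}}w(e_i)=d(v_1,v_2)$.) I expect the sign analysis of the second paragraph to be the only delicate point — specifically, confirming that the common subpath is connected and is traversed monotonically by each path, so that the per-edge sign cannot change midway; the remainder is routine coordinate arithmetic.
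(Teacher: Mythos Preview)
Your argument is correct and follows essentially the same route as the paper: both read the dot product off Lemma~\ref{lem:posneglemma} coordinatewise and then carry out a sign check on the common subpath. The only cosmetic difference is in that sign check: the paper first reduces (by swapping $v_3,v_4$) to the same-orientation case and then pins down the sign of each individual coordinate via the node $v'$ on $P$ closest to $v$, whereas you argue directly that the sign of each product term records the relative traversal direction of the two paths across that edge and that this is constant along the connected subpath $P$. Your derivation of the norm identity by specializing $v_3=v_1$, $v_4=v_2$ is the intended one.
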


\begin{proof} 
By interchanging $v_3$ and $v_4$ if necessary, it is sufficient to consider the case that $P$ is oriented in the same direction in both paths. By Lemma \ref{lem:posneglemma}, any $e$-coordinate which is non-zero in both
 $\Psi_v(v_2)-\Psi_v(v_1)$ and $\Psi_v(v_4)-\Psi_v(v_3)$ arises from $e\in P$, and has absolute value $\sqrt{w(e)}$ in both. Letting $v'$ denote the node on $P$ closest to $v$, the sign of such an $e$-coordinate in both these vectors is positive if $e$ falls before $v'$ in $P$, and negative if after. Either way, the contribution to the product in equation \eqref{eq:dot} is $w(e)$, so that claim is established. 
Taking $v_1=v_2$ and $v_3=v_4$, the last claim follows.\qed
\end{proof}

Note that the right hand side of Equation \eqref{eq:dot} is independent of the base node $v$, suggesting the particular choice of $v$ is inessential.
Its precise effect is captured by the following.

\begin{proposition}\label{prop:diffv} Given a fixed ordering of the edges of $T$, and a pair of nodes $v_1,v_2$ of $T$, the maps $\Psi_{v_1}$ and $\Psi_{v_2}$ differ by coordinate reflections and translation.  More specifically there is a reflection $R$ in some coordinates of $\RR^{m-1}$ and some vector $\textbf{a}\in \RR^{m-1}$ such that   

$$\Psi_{v_1}(v) = R\Psi_{v_2}(v) + \textbf{a} \text{ for all $v$ in $V(T)$.}$$

\end{proposition}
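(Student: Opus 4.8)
The plan is to compare the two embeddings one coordinate at a time, exploiting the fact that the $e_i$-coordinate of $\Psi_v(u)$ depends only on which of the two components of $T\setminus e_i$ contains $u$ (given the component containing $v$). Fix an edge $e_i$ and let $A_i$ be the component of $T\setminus e_i$ containing $v_2$, and $B_i$ the other. By the definition of the square root embedding, the $e_i$-coordinate of $\Psi_{v_2}(u)$ is $\sqrt{w(e_i)}$ exactly when $e_i\in P_{v_2,u}$, i.e.\ exactly when $u\in B_i$, and is $0$ otherwise; likewise the $e_i$-coordinate of $\Psi_{v_1}(u)$ is $\sqrt{w(e_i)}$ exactly when $e_i\in P_{v_1,u}$.

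I would then split on whether $e_i$ lies on the path $P_{v_1,v_2}$. If $e_i\notin P_{v_1,v_2}$, then $v_1$ also lies in $A_i$, so $e_i\in P_{v_1,u}\iff u\in B_i\iff e_i\in P_{v_2,u}$, and hence the $e_i$-coordinates of $\Psi_{v_1}(u)$ and $\Psi_{v_2}(u)$ coincide for every node $u$. If instead $e_i\in P_{v_1,v_2}$, then $v_1\in B_i$, so $e_i\in P_{v_1,u}\iff u\in A_i$, the complementary side condition; checking the two cases $u\in A_i$ and $u\in B_i$ then shows that the $e_i$-coordinate of $\Psi_{v_1}(u)$ equals $\sqrt{w(e_i)}$ minus the $e_i$-coordinate of $\Psi_{v_2}(u)$, for every $u$. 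On such a coordinate the two embeddings differ by the map $x\mapsto \sqrt{w(e_i)}-x$, which is the reflection $x\mapsto -x$ followed by a translation (equivalently, reflection about the point $\tfrac12\sqrt{w(e_i)}$).

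Finally I would assemble these coordinatewise statements. Let $S=\{\,i : e_i\in P_{v_1,v_2}\,\}$, let $R$ be the reflection of $\RR^{m-1}$ negating precisely the coordinates indexed by $S$, and let $\mathbf{a}\in\RR^{m-1}$ have $i$th entry $\sqrt{w(e_i)}$ for $i\in S$ and $0$ otherwise; note $\mathbf{a}=\Psi_{v_2}(v_1)$. Comparing $i$th entries using the two cases above gives $\Psi_{v_1}(u)=R\,\Psi_{v_2}(u)+\mathbf{a}$ for every $u\in V(T)$, which is the assertion.

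I do not expect a real obstacle here: once the key observation is made — that each coordinate is controlled solely by the side of the corresponding edge a node occupies, so that changing the base node either leaves that coordinate fixed or reflects it about $\tfrac12\sqrt{w(e_i)}$ — the argument is routine bookkeeping. The only points requiring care are correctly identifying the set of reflected coordinates as exactly the edge set of $P_{v_1,v_2}$, and verifying that a single affine map serves simultaneously for all nodes $u$, which the coordinatewise description makes transparent.
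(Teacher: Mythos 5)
Your proof is correct and follows essentially the same route as the paper: both identify the reflected coordinates as exactly those indexed by the edges of $P_{v_1,v_2}$ and arrive at $\mathbf{a}=\Psi_{v_2}(v_1)$ (the paper writes this as $-R\Psi_{v_2}(v_1)$, which is the same vector). The only difference is that you re-derive the coordinatewise case analysis directly from the definition, whereas the paper obtains it as an immediate consequence of Lemma~\ref{lem:posneglemma}.
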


\begin{proof}
From Lemma $\ref{lem:posneglemma}$ note 

$$\Psi_{v_1}(v)-\Psi_{v_1}(v_1) = R(\Psi_{v_2}(v) - \Psi_{v_2}(v_1))$$ where $R$ is the reflection that changes sign in $e$-coordinates with $e \in P_{v_1,v_2}$.  
Thus the above formula for $\Psi_{v_1}(v)$ holds with $\textbf{a}_{v_1,v_2} = -R\Psi_{v_2}(v_1).$\qed
\end{proof}

Since the choice of the base node only changes the image of the square root embedding by a Euclidean isometry, we generally suppress the $v$ in the notation, writing $\Psi = \Psi_v$. Note that
the definition of the embedding depends on several other arbitrary choices as well: Reordering the edges of $T$ permutes the coordinates of $\RR^{m-1}$. And if instead of using positive square roots in the definition of $\Psi_v$ we used negative ones in particular coordinates, this would only result in reflecting the image in some coordinate hyperplanes. Thus even allowing for such choices, the image is determined up to an isometry of Euclidean space.

By restricting from $\Psi(V)$ to the set $\Psi(X)$ we obtain  a configuration of points in Euclidean space corresponding only to the leaves of the tree. If $X$ has $n$ taxa, the affine span of $\Psi(X)$ is at most of dimension $n-1$, which shows that there is a point configuration in $\RR^{n-1}$ with pairwise distances equal to the square root of the tree distances between taxa. This is the sort of configuration produced by
\citet{de2011}.

\section{Further features of the point configuration}

Since a point configuration arising from a metric tree via the square root embedding reflects tree distances, it is far from arbitrary. For instance, the following theorem shows that embedded taxa always span spaces of the maximum dimension possible.

\begin{theorem} \label{thm:dimcor}Let $L\subseteq V$ be any subset of the nodes of $T$, with $n=|L|$.  Then $\Psi(L)$ spans an affine space of dimension exactly $n-1$. 
\end{theorem}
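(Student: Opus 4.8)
The plan is to show that the vectors $\Psi(u) - \Psi(u_0)$, as $u$ ranges over $L \setminus \{u_0\}$ for a fixed $u_0 \in L$, are linearly independent; this gives affine dimension at least $n-1$, and since $|L| = n$ the affine span has dimension at most $n-1$ automatically, so equality follows. By Proposition~\ref{prop:diffv} the choice of base node is irrelevant up to isometry, so I would take the base node $v$ to be $u_0$ itself, which simplifies bookkeeping: then $\Psi(u_0) = \mathbf 0$ and the vectors in question are just $\Psi(u)$ for $u \in L \setminus \{u_0\}$.

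The key idea is to exploit the tree structure to peel off leaves. I would argue by induction on $n = |L|$, the base case $n = 1$ being trivial. For the inductive step, consider the minimal subtree $T_L$ of $T$ spanned by the nodes in $L$; since $T_L$ is a tree with at least two nodes in $L$, it has a node $u^* \in L$ that is a leaf of $T_L$ and is distinct from $u_0$. Let $e$ be an edge on the path from $u^*$ into the rest of $T_L$ that lies on no path $P_{u_0, u}$ for any other $u \in L$ — concretely, because $u^*$ is a pendant node of $T_L$, there is an edge $e$ on $P_{u_0, u^*}$ (the last edge before reaching $u^*$, or more carefully the first edge of $T_L$ incident to $u^*$) such that $e \notin P_{u_0, u}$ for all $u \in L$ with $u \ne u^*$. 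By Lemma~\ref{lem:posneglemma}, the $e$-coordinate of $\Psi(u^*) = \Psi_{u_0}(u^*)$ is $\sqrt{w(e)} \neq 0$, while the $e$-coordinate of $\Psi(u)$ is $0$ for every other $u \in L$. Hence in any linear combination $\sum_{u \in L \setminus \{u_0\}} c_u \Psi(u) = 0$, looking at the $e$-coordinate forces $c_{u^*} = 0$; removing $u^*$ and applying the inductive hypothesis to $L \setminus \{u^*\}$ kills the remaining coefficients.

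The main obstacle is making precise the claim that a pendant node $u^*$ of the spanned subtree $T_L$ admits an edge $e$ detected only by $u^*$ among the nodes of $L$. This requires care about two things: first, that $u^*$ can be chosen distinct from the base node $u_0$ (true since $T_L$ has at least two leaves when $n \geq 2$, and if $u_0$ happens to be the unique such leaf one simply re-picks $u_0$ — or, cleanly, one notes $T_L$ has at least two pendant nodes and at least one differs from $u_0$); and second, that the relevant edge $e$ genuinely lies off every other path $P_{u_0, u}$, which follows because removing $e$ disconnects $u^*$ from all other nodes of $T_L$, in particular from $u_0$ and from every $u \in L \setminus \{u^*\}$, so no path $P_{u_0,u}$ with $u \neq u^*$ can cross $e$. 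Once this combinatorial observation is pinned down, the linear-algebra conclusion is immediate from Lemma~\ref{lem:posneglemma}.
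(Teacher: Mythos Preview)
Your proof is correct and follows essentially the same approach as the paper: induct on $|L|$, peel off a leaf $u^*$ of the minimal spanning subtree $T_L$, and use the pendant edge $e$ incident to $u^*$ in $T_L$ to see that the $e$-coordinate separates $\Psi(u^*)$ from the affine span of the remaining points. The only cosmetic difference is that the paper picks the leaf first and then chooses the base node to be any other element of $L$, whereas you fix the base node $u_0$ first and then argue that some leaf of $T_L$ differs from it; both orderings lead to the same argument.
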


\begin{proof}
If $n=1$, then $\Psi(L)$ is one point, and so spans a space of dimension 0.

Now suppose that for any $L'\subset V$  with $1\leq |L'| < n$ that the dimension of the affine span of $\Psi(L')$ is $|L'|-1$.  Consider a set $L$ of $n$ nodes. 
Choose $w\in L$ to be any leaf of the subtree of $T$ spanned by $L$, $e$ to be the edge in that subtree containing $w$, and $v\in L$ with $v\ne w$. Let the square root embedding be given by $\Psi_v$.  

Now $L'=L \smallsetminus \{w\}$ has $n-1$ elements, so $\Psi(L')$ spans an $(n-2)$-dimensional space.  Note that the $e$-coordinate of all points in $\Psi_v(L')$ is 0.  However the $e$-coordinate of $\Psi_v(w)$ is positive, and so $\Psi_v(w)$ is not in the affine span of $\Psi_v(L')$.  Therefore the dimension of the span of $\Psi_v(L)$ is $(n-2)+1 = n-1$.\qed
\end{proof}

Since the affine span of the $|L|=n$ points in $\Psi(L)$ is $n-1$ dimensional, basic facts of Euclidean geometry imply
that any other point configuration of $n$ points with the same pairwise distances can be obtained from it by a unique Euclidean isometry of $\RR^{n-1}$ (i.e., by  rotation, reflection, and translation). The point configuration is thus essentially unique.

\begin{figure}[h]
\begin{center}
\hskip -.8in
\begin{minipage}[c]{1.5in}
\includegraphics[width=1.30in]{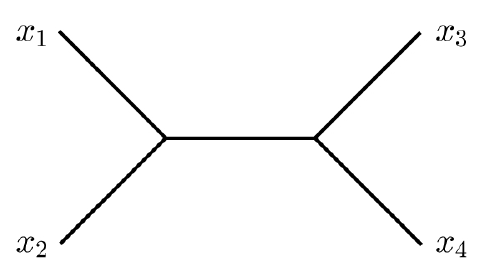}
\end{minipage}
\hskip .0in
\begin{minipage}[c]{1.in}
\includegraphics[width=3.2in]{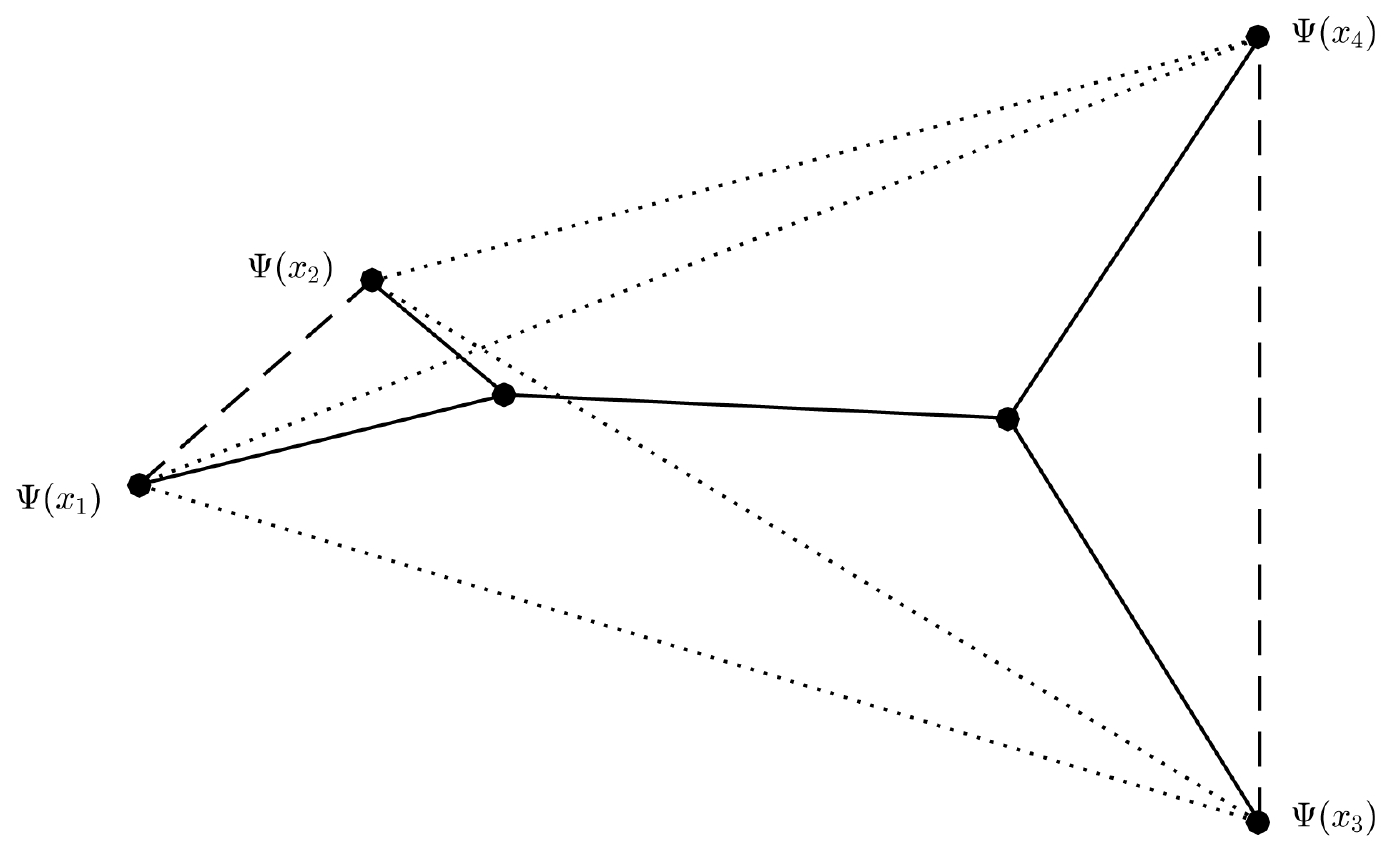}
 \end{minipage}
\hskip 1.in \phantom{X}
\end{center}
\caption{A 4-taxon tree (left) and the 3-dimensional image of its leaves under the square root embedding (right). The dotted and dashed lines are the edges of a tetrahedron, with lengths the square root of tree distances. The dashed lines connect taxa in each set of the split $\{x_1,x_2\} ~|~ \{x_3,x_4\}$, and are thus orthogonal.
The dotted lines connect taxa across the split sets, so any pair such as $\overline{\Psi(x_1)\Psi(x_3)}$ and $\overline{\Psi(x_2)\Psi(x_4)}$ form an acute angle.
All 6 vertices of the tree can be embedded only in 5-dimensional space. The tree shown inside the tetrahedron is the projection of the tree in 5-space onto the 3-space spanned by the leaves; distances along it are not those of the 5-dimensional embedding.} \label{fig:tet}
\end{figure}

While the point configuration $\psi(X)$ corresponding to the taxa on a metric phylogenetic tree was designed to encode tree distances between taxa,
these distances determine the full tree, so the configuration must contain all the information that the tree does. In particular, the configuration must reflect all splits in the tree in some geometric way.
As a motivating image, Figure \ref{fig:tet} shows a 4-taxon tree and the tetrahedron whose vertices form the corresponding configuration.  Note the edge $\overline{\Psi(x_1)\Psi(x_2)}$ is orthogonal to the edge $\overline{\Psi(x_3)\Psi(x_4)}$.  In addition, any two edges of the tetrahedron between these two orthogonal edges form an acute angle. These characteristics appear more generally, as the following shows.

\begin{theorem}\label{thm:ortho}
Let $L_1,L_2$ be two disjoint subsets of $V$. The following are equivalent:
\begin{enumerate}
\item[(a)] The minimal spanning trees of $L_1$ and $L_2$ in $T$ have no edges in common.
\item[(b)] The affine spans of $\Psi(L_1)$ and $\Psi(L_2)$ are orthogonal. 

\item[(c)] For  all $v_1,v_2\in L_1$, $w_1,w_2\in L_2$, the vectors $\Psi(v_1)-\Psi(w_1)$ and  $\Psi(v_2)-\Psi(w_2)$ form an acute or right angle.
\end{enumerate}
\end{theorem}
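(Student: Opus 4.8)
The plan is to prove the chain of implications $(a)\Rightarrow(b)\Rightarrow(c)\Rightarrow(a)$, relying throughout on Lemma~\ref{lem:posneglemma} and Theorem~\ref{thm:subpath}. The unifying observation is that for any subset $L\subseteq V$, the affine span of $\Psi(L)$ is a translate of the linear span of the difference vectors $\{\Psi(v)-\Psi(v_0): v\in L\}$ for any fixed $v_0\in L$, and by Lemma~\ref{lem:posneglemma} such a difference vector is supported exactly on the $e_i$-coordinates with $e_i\in P_{v_0,v}$. Hence the linear span of the differences of $\Psi(L)$ lives inside the coordinate subspace indexed by the edge set $E(L)$ of the minimal spanning tree of $L$ in $T$ (since every edge on a path $P_{v_0,v}$ between two nodes of $L$ lies in that spanning tree, and conversely every edge of the spanning tree lies on some such path). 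I would state this as a preliminary remark, since it is the engine of the argument.

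For $(a)\Rightarrow(b)$: if the spanning trees of $L_1$ and $L_2$ share no edges, then $E(L_1)$ and $E(L_2)$ are disjoint subsets of the edge index set, so the coordinate subspaces they determine are mutually orthogonal, and therefore the linear spans of the difference vectors of $\Psi(L_1)$ and $\Psi(L_2)$ are orthogonal; that is exactly what it means for the affine spans to be orthogonal. For $(b)\Rightarrow(c)$: if the spans are orthogonal, then fixing basepoints $w_1\in L_2$ and $v_1\in L_1$ say, one writes $\Psi(v_1)-\Psi(w_1)$ and $\Psi(v_2)-\Psi(w_2)$ each as a sum of a vector in $\mathrm{span}(\Psi(L_1)-\Psi(v_1))$, a vector in $\mathrm{span}(\Psi(L_2)-\Psi(w_1))$, and the fixed connecting vector $\Psi(v_1)-\Psi(w_1)$; when the dot product is expanded, the cross terms between the two orthogonal spans vanish, leaving a sum of nonnegative terms (the square norm of the connecting vector together with mixed terms that by Theorem~\ref{thm:subpath} and the disjointness have the right sign), so the angle is acute or right. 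I would write out the bookkeeping of which terms survive carefully, as that is where sign errors hide.

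For $(c)\Rightarrow(a)$, I would argue the contrapositive: suppose the spanning trees of $L_1$ and $L_2$ share an edge $e$. Then $e$ lies on a path $P_{v_1,v_2}$ with $v_1,v_2\in L_1$ and on a path $P_{w_1,w_2}$ with $w_1,w_2\in L_2$; orienting so that $e$ is traversed in opposite directions along these two paths (replacing $v_2$ by $v_1$-side endpoint etc. as needed), Theorem~\ref{thm:subpath} gives that the common subpath $P_{v_1,v_2}\cap P_{w_1,w_2}$—which contains $e$ and is therefore nonempty—contributes a strictly negative dot product, provided we can arrange that the subpath is oriented oppositely. The one subtlety is that the common subpath could a priori contribute with a positive sign for one choice of orientation of the pairs; but since we are free to swap $v_1\leftrightarrow v_2$ (which reverses the orientation of $P_{v_1,v_2}$ along the entire common subpath) we can always force the negative sign, producing two difference vectors at an obtuse angle and contradicting (c).

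The main obstacle I anticipate is the $(b)\Rightarrow(c)$ step: keeping track of exactly which of the several terms in the expanded dot product survive orthogonality and then checking that each survivor is nonnegative requires care, particularly in identifying the ``connecting'' term $\|\Psi(v_1)-\Psi(w_1)\|^2$ or the relevant subpath-length terms via Theorem~\ref{thm:subpath} and confirming their signs. Everything else is essentially the combinatorics of which coordinate axes are hit, which Lemma~\ref{lem:posneglemma} hands us directly.
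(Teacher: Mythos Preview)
Your $(a)\Rightarrow(b)$ and $(c)\Rightarrow(a)$ arguments are essentially correct (the latter requires, as in the paper, translating a shared edge on $P_{v_1,v_2}$ and $P_{w_1,w_2}$ into a shared edge on $P_{w_1,v_1}$ and $P_{w_2,v_2}$ with opposite orientation---you gesture at this but should write it out). The genuine gap is $(b)\Rightarrow(c)$. With $C=\Psi(v_1)-\Psi(w_1)$ your decomposition gives
\[
(\Psi(v_1)-\Psi(w_1))\cdot(\Psi(v_2)-\Psi(w_2)) = \|C\|^2 + C\cdot(\Psi(v_2)-\Psi(v_1)) + C\cdot(\Psi(w_1)-\Psi(w_2)),
\]
and hypothesis~(b) kills nothing here, since $C$ lies in neither difference span. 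Worse, by Theorem~\ref{thm:subpath} the two mixed terms equal $-d(v_1,m_1)$ and $-d(w_1,m_2)$ for the medians $m_1$ of $\{v_1,v_2,w_1\}$ and $m_2$ of $\{v_1,w_1,w_2\}$, so they are nonpositive, not nonnegative. To conclude the total is $\ge 0$ you need $d(v_1,m_1)+d(w_1,m_2)\le d(v_1,w_1)$, and that inequality is exactly where edge-disjointness of the two spanning trees---condition~(a)---enters, which you have not yet derived from~(b). In fact $(b)\Rightarrow(c)$ fails for arbitrary Euclidean configurations: with $\Psi(v_1)=(0,0)$, $\Psi(v_2)=(2,0)$, $\Psi(w_1)=(1,1)$, $\Psi(w_2)=(1,-1)$ the difference spans are orthogonal yet $(\Psi(v_1)-\Psi(w_1))\cdot(\Psi(v_2)-\Psi(w_2))=-2$.

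The clean fix is to drop the cyclic chain and prove $(a)\Leftrightarrow(b)$ and $(a)\Leftrightarrow(c)$ separately, which is what the paper does. The missing direction $(b)\Rightarrow(a)$ is immediate from Theorem~\ref{thm:subpath}: if an edge $e$ lies in both spanning trees, pick $v_1,v_2\in L_1$ and $w_1,w_2\in L_2$ whose connecting paths contain $e$; then $(\Psi(v_1)-\Psi(v_2))\cdot(\Psi(w_1)-\Psi(w_2))$ is $\pm$ the length of a nonempty common subpath, hence nonzero, contradicting~(b). With~(a) in hand, the sign bookkeeping you outlined then yields~(c).
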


\begin{proof}
To see (a) and (b) are equivalent, note the affine span of $\Psi(L_i)$ is a translate of the plane generated by the vectors $\Psi(v_1)-\Psi(v_2)$ for $v_1,v_2\in L_i$.
Thus the spans are orthogonal if, and only if,

$$(\Psi(v_1)-\Psi(v_2))\cdot(\Psi(w_1)-\Psi(w_2))= 0$$
for all $v_1,v_2\in L_1$, $w_1,w_2\in L_2$.
But by Theorem \ref{thm:subpath}, this is equivalent to all paths $P_{v_1,v_2}$ and $P_{w_1,w_2}$ having no edges in common. This is in turn equivalent to the disjointness of the sets of edges in the spanning trees.

For the equivalence of (a) and (c), note that  the minimal spanning trees have an edge in common if, and only if,
there are points $v_1,v_2\in L_1$, $w_1,w_2\in L_2$ with
$P_{v_2,v_1}$ and $P_{w_1,w_2}$ having at least one an edge in common, and with the common subpath oriented in the same direction. But this is equivalent to
$P_{w_1,v_1}$ and $P_{w_2,v_2}$ having an edge in common, and common subpath oriented in the opposite direction.
 By Theorem \ref{thm:subpath}, this is exactly that

$$(\Psi(v_1)-\Psi(w_1))\cdot(\Psi(v_2)-\Psi(w_2))< 0,$$
or that these vectors form an obtuse angle.\qed
\end{proof}

One consequence of the equivalence of (a) and (c) is that no angle at a point in the configuration is obtuse: Taking $L_1=\{v_1,v_2\}$ and $L_2=\{w\}$, the minimal spanning tree for $L_2$ has no edges, and hence none in common with those of $L_1$. Thus

$$(\Psi(v_1)-\Psi(w))\cdot(\Psi(v_2)-\Psi(w))\ge 0,$$ so these vectors form an acute or right angle.

For the following, by a \emph{generalized split} of a tree $T$ we mean a bipartition of the taxa $X$ induced by deleting an edge from some binary refinement of $T$. Since a generalized split gives sets of
taxa whose minimal spanning trees contain no common edges, we immediately obtain:

\begin{corollary}\label{cor:ortho}
$X_1|X_2$ is a generalized split of the tree $T$ if, and only if, either (and hence both) of the following hold:
\begin{enumerate}
\item[(a)] The affine spans of $\Psi(X_1)$ and $\Psi(X_2)$ are orthogonal. 
\item[(b)] For  all $x_1,x_2\in X_1$, $x_3,x_4\in X_2$, the vectors $\Psi(x_1)-\Psi(x_3)$ and  $\Psi(x_2)-\Psi(x_4)$ form an acute or right angle.
\end{enumerate}
\end{corollary}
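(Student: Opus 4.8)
The plan is to reduce the corollary to Theorem~\ref{thm:ortho} applied with $L_1=X_1$ and $L_2=X_2$. Since $X_1|X_2$ is a bipartition of the \emph{entire} taxon set $X$, statements (a) and (b) of the corollary are verbatim the statements (b) and (c) of Theorem~\ref{thm:ortho} (after the relabeling $v_1=x_1,v_2=x_2\in L_1$, $w_1=x_3,w_2=x_4\in L_2$). So it suffices to prove the one combinatorial equivalence not already contained in Theorem~\ref{thm:ortho}: that $X_1|X_2$ is a generalized split of $T$ if and only if the minimal spanning trees $T_1,T_2$ of $X_1,X_2$ in $T$ have no common edge. No further use of $\Psi$ or of Theorem~\ref{thm:subpath} is needed.

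For the ``only if'' direction I would start from a binary refinement $T'$ of $T$ and an edge $e'$ of $T'$ whose removal induces $X_1|X_2$. The minimal spanning tree of $X_i$ in $T'$ is the union of the $T'$-paths between members of $X_i$, all of which stay inside the component of $T'\smallsetminus e'$ containing $X_i$; hence these two subtrees of $T'$ are edge-disjoint. Since $T$ is recovered from $T'$ by contracting the ``new'' edges of the refinement, and a $T$-path between two taxa is just the image of the corresponding $T'$-path, the edge set of $T_i$ pulls back into that of the minimal spanning tree of $X_i$ in $T'$. Edge-disjointness therefore passes down to $T_1$ and $T_2$ in $T$.

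For the ``if'' direction, assume $T_1$ and $T_2$ share no edge. As connected subtrees of a tree they then share at most one vertex. If they share none, I would let $Q$ be the unique path in $T$ from $T_1$ to $T_2$; every edge $f$ of $Q$ lies in neither $T_1$ nor $T_2$, so deleting $f$ from $T$ itself puts $T_1\supseteq X_1$ in one component and $T_2\supseteq X_2$ in the other, and because $X=X_1\cup X_2$ this is precisely the split $X_1|X_2$ --- so it is already a split of $T$, hence a generalized split. If instead $T_1\cap T_2=\{z\}$, the key point is that every branch of $T$ hanging off a node contains a leaf of $T$, i.e.\ a taxon; since every taxon lies in $X_1\cup X_2$ and hence in $T_1$ or $T_2$, every edge of $T$ at $z$ must belong to $T_1$ or to $T_2$. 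I would then split $z$ into $z_1,z_2$ joined by a new edge $e'$, moving the $T_1$-edges to $z_1$ and the $T_2$-edges to $z_2$; this is a refinement of $T$ in which deleting $e'$ separates $X_1$ from $X_2$, and any binary refinement of it retains $e'$ without moving taxa across it, exhibiting $X_1|X_2$ as a generalized split.

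The hard part is this ``$T_1\cap T_2=\{z\}$'' case: one must make sure the refinement displays exactly the bipartition $X_1|X_2$ and nothing coarser, which depends on the ``every hanging branch contains a taxon'' observation together with $X_1\cup X_2=X$ to exclude a third direction at $z$ that could drag stray taxa to the wrong side, and on the remark that one can always refine to a binary tree without touching the distinguished edge $e'$. The ``only if'' direction is by comparison a routine contraction argument.
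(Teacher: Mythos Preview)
Your approach is correct and is exactly the paper's: apply Theorem~\ref{thm:ortho} with $L_i=X_i$, reducing the corollary to the combinatorial equivalence between ``$X_1|X_2$ is a generalized split'' and ``the minimal spanning trees of $X_1,X_2$ in $T$ share no edge.'' The paper's entire justification is the single sentence preceding the corollary, asserting only the forward direction of that equivalence; your careful treatment of the converse (in particular the vertex-splitting refinement at the shared node $z$) supplies the argument the paper leaves implicit.
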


Although expressed in geometric language, this statement is essentially the same as the well-known 4-point condition: For a generalized split $X_1|X_2$ on a tree $T$ with $x_1,x_2\in X_1$, $x_3,x_4\in X_2$,

$$d(x_1, x_2) + d(x_3, x_4) \le d(x_1, x_3) + d(x_2, x_4)=d(x_1, x_4) + d(x_2, x_2).$$
Using $d(x_i,x_j)=\|\Psi(x_i)-\Psi(x_j)\|^2$ one can check that the equality here is the same as condition (a) and the inequality is the same as condition (b) of the Corollary.

\section{Point configuration from distance data}

If distances $d(x_i,x_j)$ for $n$ taxa are a tree metric, but the full tree is not yet known, then the square root map as defined above is of course not directly usable. However, since by Theorem
\ref{thm:subpath} the point configuration is proved to exist, $\{\mathbf y_i\}$ with $\|\mathbf y_i-\mathbf y_j\|=\sqrt{d(x_i,x_j)}$ can be computed by the methods of classical multidimensional scaling suggested by \citet{de2011}, and implemented in most standard statistical software. In concise form, for $n$ taxa this procedure is:
\begin{enumerate}
\item From the $n\times n$ matrix D of pairwise distances between taxa, compute the ``doubly centered" positive semidefinite symmetric matrix $H=-\frac 1 2 FDF$ where $F=I-\frac 1n\mathbf 1 \mathbf 1^T$, with $\mathbf 1$ a column vector of 1s.
\item Compute a factorization $H=X^TX$, where $X$ is a real $(n-1)\times n$ matrix. $X$ is only determined up to multiplication on the left by an $(n-1)\times(n-1)$ orthogonal matrix $Q$, since $(QX)^T(QX)=X^TQ^TQX=X^TX$.

\item
The columns of $X$ give points in $\mathbb R^{n-1}$ realizing the point configuration for the taxa. 
\end{enumerate}

The points produced by this procedure have the additional feature that their centroid is $\mathbf 0$.
The indeterminacy of $X$ up to multiplication by $Q$ reflects that distances within the configuration are preserved by rotation and reflection.
Note that this procedure only produces the point configuration for the leaves of the tree, and not for the internal nodes.

With only an estimate $\hat D \approx D$ of the true distances, 
one can still attempt to apply the same procedure to $\hat D $. If the errors are not too large, 
then $\hat H=-\frac 1 2 F\hat DF$ will also be positive semidefinite and the desired matrix factorization will still exist. This gives a point configuration approximating the true one, for which the various properties outlined in the previous section will hold only approximately. Reasoning with this approximate configuration, however, one can better understand some tree construction algorithms, as we show in the next section.

\section{Relationship to NJ and BIONJ Algorithms}

The Neighbor Joining Algorithm (NJ) \citep{SaiNei,SK88} provides the basic framework for a number of methods of building trees from approximate distance data. Given dissimilarity values $\delta_{ij}$ between all taxa $x_i,x_j\in X$ that are assumed to approximate a tree metric, it proceeds in an iterative fashion by picking a likely cherry on the unknown tree using the neighbor joining criterion, and then agglomerating those taxa. The BIONJ algorithm of \citet{gascuel97} introduced an important modification to NJ that improves performance on certain types of trees, without significantly degrading it on others. (This modification is also adopted by WEIGHBOR \citep{weighbor}, which further changes the cherry picking criterion.) Here we present the BIONJ modification in a new light, as arising naturally from the point configuration and Theorem \ref{thm:ortho}.
 
\medskip

BIONJ and NJ both use the same neighbor joining criterion, which need not be discussed here, to pick the initial two taxa to be joined. If these are taxa $x_1$ and $x_2$, then both algorithms replace the two by a single node $v$ to which $x_1,x_2$ will be joined by edges as a tree is built. They then
estimate lengths $\hat d(x_1,v)$, $\hat d(x_2,v)$ for these edges, and
calculate a dissimilarity $\delta_{vi}$ between $v$ and the remaining taxa $i=3,\dots n$ by different formulae. Having reduced the number of taxa by one, both algorithms iterate these steps.

The only difference between NJ and BIONJ with any implications for the topology of the tree to be constructed is in the calculation of  the $\delta_{vi}$ when taxa $x_1$ and $x_2$ are joined at $v$. Since \cite{gascuel94} has shown that one may add a constant (that is, a number independent of $i$) to such a formula without affecting the later behavior of the criterion used for picking cherries, we may present the formulas for these calculations most simply as follows:
Both choose
\begin{equation}\label{eq:dvi}
\delta_{vi}= \lambda \delta_{1i}+(1-\lambda) \delta_{2i},
\end{equation}
for some $0\le \lambda\le 1$.
NJ simply sets $\lambda=1/2$ while BIONJ
chooses $\lambda$ to
solve the constrained minimization problem

\begin{align}
\text{minimize } &f(\lambda)=\sum_{i=3}^n \lambda^2 \delta_{1i} +\lambda(1-\lambda)(\delta_{1i}+\delta_{2i}-\delta_{12})+
(1-\lambda)^2 \delta_{2i}\label{eq:opt}\\
\text{subject to}&\ 0\le \lambda \le 1.\notag
\end{align}
The individual terms in this sum are approximations of the variances of the $\delta_{vi}$, as derived by \citet{gascuel97} under a reasonable model of distance error.

Now suppose that the dissimilarities are sufficiently close to a tree metric that via multidimensional scaling they correspond to some Euclidean point configuration $\{\mathbf z_i\}$ with $\delta_{ij}=\|\mathbf z_i-\mathbf z_j\|^2$.
Then the formula in equation \eqref{eq:dvi} used by both NJ and BIONJ can be expressed as

$$\delta_{vi}= \lambda \|\mathbf z_1-\mathbf z_i\|^2+(1-\lambda) \|\mathbf z_2-\mathbf z_i\|^2=\| \lambda \mathbf z_1 +(1-\lambda)\mathbf z_2 -\mathbf z_i\|^2+\lambda(1-\lambda)\|\mathbf z_1-\mathbf z_2\|^2.$$
Since the last term is independent of $i$, it can be dropped to give an alternative formula which would lead to the same tree topology:

$$\tilde \delta_{vi}=\| \lambda \mathbf z_1 +(1-\lambda)\mathbf z_2 -\mathbf z_i\|^2.$$
This has a simple geometric interpretation in terms of the point configuration: The node $v$ corresponds to a point $\lambda \mathbf z_1 +(1-\lambda)\mathbf z_2$ on the line segment between the points $\mathbf z_1$ and $\mathbf z_2$ of the taxa being joined, located at proportion $\lambda$ of the way from $\mathbf z_2$ to $\mathbf z_1$. For NJ with $\lambda= 1/2$ this is the midpoint, but for BIONJ it depends on the solution of the minimization problem \eqref{eq:opt}.

The orthogonality described in part (a) of Theorem \ref{thm:ortho} implies that for a dissimilarity that is actually a tree metric, any choice of $\lambda$ will lead to the same tree topology:
Indeed, we need to check only that  $\tilde \delta_{vi}(\lambda_1)-\tilde \delta_{vi}(\lambda_2)$ is independent of $i$. Since

\begin{align*}
\tilde \delta_{vi}(\lambda_1)-\tilde \delta_{vi}(\lambda_2)&=\| \lambda_1 \mathbf z_1 +(1-\lambda_1)\mathbf z_2 -\mathbf z_i\|^2-\| \lambda_2 \mathbf z_1 +(1-\lambda_2)\mathbf z_2 -\mathbf z_i\|^2\\
&=\left ( (\lambda_1-\lambda_2) \mathbf z_1 -(\lambda_1-\lambda_2)\mathbf z_2\right )\cdot 
 \left( ( \lambda_1 +\lambda_2) \mathbf z_1 +(2-\lambda_1-\lambda_2)\mathbf z_2 -2\mathbf z_i\right )
\end{align*}
we have

$$\left ( \tilde \delta_{vi}(\lambda_1)-\tilde \delta_{vi}(\lambda_2)\right ) -\left ( \tilde \delta_{vj}(\lambda_1)-\tilde \delta_{vj}(\lambda_2)\right )=(\lambda_1-\lambda_2) (\mathbf z_1-\mathbf z_2)\cdot 2 (\mathbf z_i-\mathbf z_j)=0
$$ by the theorem.

Turning to the optimization problem \eqref{eq:opt} which determines $\lambda$ for BIONJ, the objective function $f(\lambda)$ can be expressed as

\begin{align*}
f(\lambda)&=\sum_{i=3}^n \lambda^2 \|\mathbf z_1-\mathbf z_i\|^2 +\lambda(1-\lambda)(\|\mathbf z_1-\mathbf z_i\|^2+\|\mathbf z_2-\mathbf z_i\|^2-\|\mathbf z_1-\mathbf z_2\|^2)+
(1-\lambda)^2 \|\mathbf z_2-\mathbf z_i\|^2\\
&=\sum_{i=3}^n \lambda^2 \|\mathbf z_1-\mathbf z_i\|^2 +2\lambda(1-\lambda)(\mathbf z_1-\mathbf z_i)\cdot(\mathbf z_2-\mathbf z_i)+
(1-\lambda)^2 \|\mathbf z_2-\mathbf z_i\|^2\\
&=\sum_{i=3}^n \|\lambda \mathbf z_1+(1-\lambda)\mathbf z_2-\mathbf z_i\|^2.
\end{align*}
Thus the minimization problem is to find the point on the line segment between $\mathbf z_1$ and $\mathbf z_2$ that minimizes the sum of the squares of the distances to all other points.
But this has the same minimizer as
$$\tilde f(\lambda)=\left \|\lambda \mathbf z_1+(1-\lambda)\mathbf z_2-\frac1{n-2}\sum_{i=3}^n \mathbf z_i\right \|^2$$
as the derivatives of $f$ and $\tilde f$ are the same up to a positive constant factor. Thus the calculation of $\lambda$ by BIONJ simply locates the point on the line segment between $\mathbf z_1$ and $\mathbf z_2$ that is closest to the centroid of the remaining points. 
Figure \ref{fig:BIONJ} illustrates this, through a sketch meant to represent the geometry in $(n-1)$-dimensional space.

\begin{figure}[h]
\begin{center}
\includegraphics[width=2.5in]{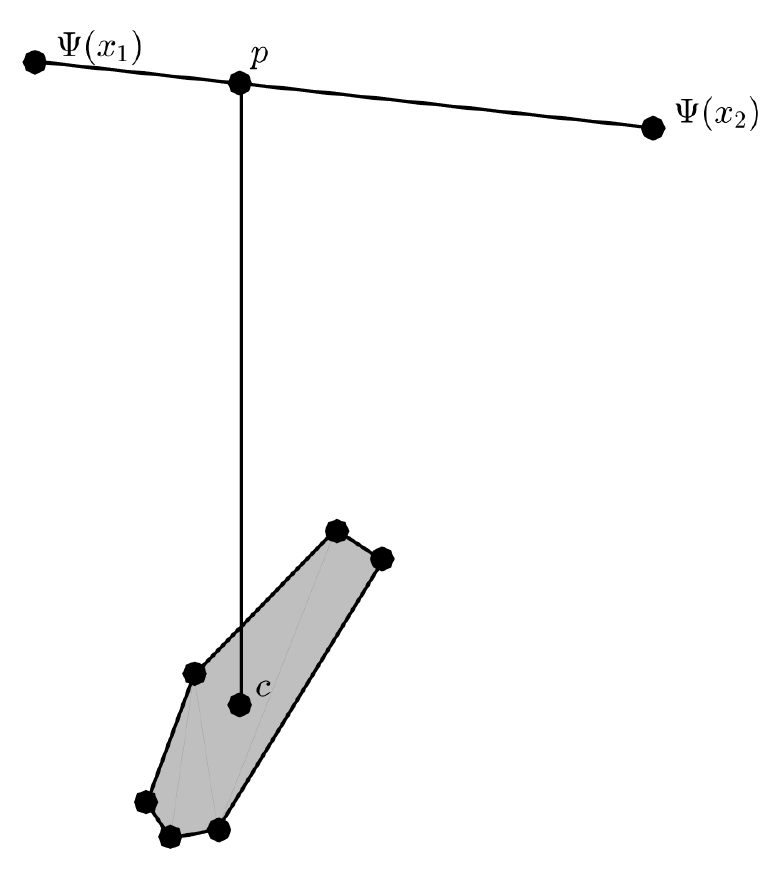}
\end{center}
\caption{Once $x_1$ and $x_2$ have been identified as forming a cherry, the BIONJ algorithm finds the point $p$ on the segment  $\overline{\Psi(x_1)\Psi(x_2)}$ that is closest to the centroid $c$ of the embeddings of the other $n-2$  taxa. In this schematic depiction the other embedded taxa are represented as the vertices of a polygon, but should be vertices of an $(n-3)$-dimensional polyhedron. 
The two line segments drawn here would then be orthogonal to all directions in that polyhedron. 
The NJ algorithm uses the midpoint of  $\overline{\Psi(x_1)\Psi(x_2)}$ in place of $p$. A variant of BIONJ discussed in the text locates $p$ on $\overline{\Psi(x_1)\Psi(x_2)}$ closest to the polyhedron.}\label{fig:BIONJ}
\end{figure}

This viewpoint suggests an extension of the innovation of BIONJ: One could choose $\lambda$ to give the point on the line segment between $\mathbf z_1$ and $\mathbf z_2$ that is closest
to the convex span of the remaining points. To put the three approaches on a common footing, for $\lambda_i,\mu_j\ge0$, $\sum_{i=1}^2 \lambda_i=1$, $\sum_{j=3}^n \mu_j=1$ consider the objective function giving the distance between a point in the span of $\mathbf z_1, \mathbf z_2$ to one in the span of the remaining points,

$$g(\boldsymbol \lambda, \boldsymbol \mu)=\left \| \sum_{i=1}^2 \lambda_i \mathbf z_i -\sum_{j=3}^n \mu_j \mathbf z_j\right \|^2.$$
Then NJ avoids minimizing this function by choosing $\boldsymbol \lambda=(1/2,1/2)$, while BIONJ simplifies the minimization in $\boldsymbol \lambda$ by first setting $\mu_j=1/(n-2)$. A third algorithm would find the minimizer $(\boldsymbol \lambda, \boldsymbol \mu)$, and use $\boldsymbol \lambda$ in
equation \eqref{eq:dvi}. If a dissimilarity is an exact tree metric, then BIONJ and the new algorithm would produce exactly the same $\boldsymbol \lambda$, due to the orthogonality of Corollary \ref{cor:ortho}; the closest point in the line segment to the convex span of the other points will also be the closest point to the centroid of those points.

But for dissimilarities including noise, the new algorithm could potentially improve performance, by allowing different weightings $\boldsymbol \mu=(\mu_j)$ for the points not in the cherry, just as BIONJ allows different weights $\boldsymbol \lambda=(\lambda_j)$ for the points in the cherry. A fourth algorithm is also possible, in which one would solve a similar optimization problem where once a split had been identified by repeated application of the usual NJ criterion, 
the objective function is built from the difference of weighted sums of points for the two split set. (In other words, rather than using only two points in the $\lambda$-sum, we use all of the original points that have been agglomerated to form these.)

The constrained optimization problems for both of these new algorithms can actually be cleanly expressed in terms of the original dissimilarities $\delta_{ij}$, so it is not necessary to calculate the $\mathbf z_i$ \citep{LayerThesis}. However, when 
we tested these algorithms on simulated DNA sequences, performance was essentially the same as that of BIONJ, as measured by topological accuracy (average Robinson-Foulds distance from correct topology, or percentage correct topology). For  some trees and sequence lengths the new algorithms might be a percentage point or two better in terms of average RF distance, but for others they were worse by similar amounts. We did not find any trees on which the new algorithms offered a substantial improvement.
Moreover, these algorithms introduce an additional computational burden of solving a quadratic minimization problem in many variables to determine $\boldsymbol \lambda$. Although there are excellent software packages for doing this, they are not as fast as using BIONJ's formula for $\boldsymbol \lambda$,
and so these approaches do not seem to be worthwhile in practice.

\section{Conclusion}

While distance methods for tree inference are seldom the first choice for data analysis, they offer significant computational advantages over full Maximum Likelihood or Bayesian analyses, and are still highly relevant to empirical work. For instance, a number of fast and statistically-consistent methods of species tree inference that proceed by first constructing a distance matrix from a collection of gene trees, and then using that to find the species tree  \citep{LiuEtAl2009,  mossel2010,  LiuEtAlMax, LiuYuUGT, JewettAndRosenberg12, ADR_STAR}.  We have shown the Euclidean point configuration associated to intertaxon dissimilarities provides an alternative viewpoint on the distance methods underlying these, and believe it may be useful for future methodological progress as well.

Though our development of the point configuration avoided reference to the diffusion model that motivated \citet{ de2011}, in the context of that model it is still natural to view it as capturing the covariance. Though we omit details, the independent contrasts introduced by
\citet{Fels85} in relation to such a model can be seen as closely tied to computing a particular set of orthogonal directions in the point configuration space, and inference of states at internal nodes of a tree reduces to linear interpolation in the configuration space.

Finally, the more detailed understanding of the Euclidean point configuration we have given should be applicable to obtaining a better understanding of uses of PCA for phylogenetic purposes, such as by \citet{de2012}. For instance, Theorem \ref{thm:dimcor}
shows that there is always some loss of information in focusing on only some principal components, while the embedding map itself allows one to investigate how tree topology and edge lengths are reflected in individual components.

\bibliographystyle{plainnat}
\bibliography{SQRembed}

\end{document}